
\documentclass{IEEEtran}
 
\linepenalty=4500

\usepackage{cite}

\usepackage{amsmath,amssymb,amsfonts}
\usepackage{amsthm}
\usepackage{algorithm}
\usepackage{algpseudocode}
\usepackage{graphicx}
\usepackage{textcomp}
\usepackage{empheq}
\usepackage{mathrsfs} 
\usepackage{physics} 
\usepackage{derivative} 
\usepackage{cuted} 
\usepackage{calc}
\usepackage{xparse}
\usepackage{multicol}
\usepackage{subcaption}
\usepackage{booktabs} 

\usepackage{lipsum}
\usepackage{microtype}
\usepackage{optidef}
\usepackage[per-mode=symbol]{siunitx}
\sisetup{
  scientific-notation = true,
  retain-unity-mantissa = false  
}
\usepackage[hidelinks]{hyperref}
\usepackage[nameinlink,capitalise]{cleveref}

\AfterEndEnvironment{strip}{\leavevmode}

\newtheorem{lem}{Lemma}

\newcommand{\bigO}{\mathcal{O}}

\newcommand{\tveeG}{^{\scriptscriptstyle\vee}} 
\newcommand{\twedG}{^{\scriptscriptstyle\wedge}}
\newcommand{\lieg}{\mathfrak{g}}
\newcommand{\LieG}{\mathrm{G}}
\newcommand{\AdG}{\mathrm{Ad}}
\newcommand{\adG}{\mathrm{ad}}
\newcommand{\expG}{\, \mathrm{exp}}
\newcommand{\logG}{\, \mathrm{log}}
\newcommand{\LambdaG}{\mathrm{J}}
\newcommand{\LambdaGbar}{ \bar{\mathrm{J}}}


\newcommand{\inv}{^{-1}}
\newcommand{\tp}{^{ \sf T}}
\newcommand{\invtp}{^{-\sf T}}

\DeclareMathOperator{\diag}{diag}
\DeclareMathOperator{\Ad}{Ad}
\DeclareMathOperator{\ad}{ad}

\newtheorem*{ass*}{Assumptions}
\newtheorem*{problemstatement*}{Problem Statement}

\newtheorem{thm}{Theorem}
\newtheorem{prop}{Proposition}
\newcommand{\Log}{\mathrm{Log}}
\newcommand{\Exp}{\mathrm{Exp}}

\newcommand{\Rf}{\mathbb{R}}

\newcommand{\GL}{\mathrm{GL}}

\newcommand{\SO}{\mathrm{SO}}
\newcommand{\so}{\mathfrak{so}}

\newcommand{\SE}{\mathrm{SE}}

\newcommand{\nhphantom}[1]{\sbox0{#1}\hspace{-\the\wd0}}

\DeclareFontFamily{U}{mathx}{\hyphenchar\font45}
\DeclareFontShape{U}{mathx}{m}{n}{<-> mathx10}{}
\DeclareSymbolFont{mathx}{U}{mathx}{m}{n}
\DeclareMathAccent{\widecheck}{0}{mathx}{"71}
\DeclareFontFamily{U}{mathx}{\hyphenchar\font45}
\DeclareFontShape{U}{mathx}{m}{n}{<-> mathx10}{}
\DeclareSymbolFont{mathx}{U}{mathx}{m}{n}
\DeclareMathAccent{\widehat}{0}{mathx}{"70}
\makeatletter
\newcommand\makebig[2]{%
  \@xp\newcommand\@xp*\csname#1\endcsname{\bBigg@{#2}}%
  \@xp\newcommand\@xp*\csname#1l\endcsname{\@xp\mathopen\csname#1\endcsname}%
  \@xp\newcommand\@xp*\csname#1r\endcsname{\@xp\mathclose\csname#1\endcsname}%
}
\makeatother
\makebig{biggg} {3.0}
\makebig{Biggg} {3.5}
\makebig{bigggg}{4.0}
\makebig{Bigggg}{5.5}

\newlength\mytemplena
\newlength\mytemplenb
\DeclareDocumentCommand\myalignalign{sm}
{
  \settowidth{\mytemplena}{$\displaystyle #2$}%
  \setlength\mytemplenb{\widthof{$\displaystyle=$}/2}%
  \hskip-\mytemplena%
  \hskip\IfBooleanTF#1{-\mytemplenb}{+\mytemplenb}%
}

\Crefname{equation}{}{}
\Crefname{figure}{Fig.}{Figs.}
\Crefname{tabular}{Tab.}{Tabs.}
\crefname{ass}{assumption}{assumptions}


\usepackage{colortbl}
\definecolor{lightgray}{gray}{0.9}
\newcommand{\greycell}{\cellcolor{lightgray}}

\allowdisplaybreaks

\makeatletter
\renewcommand{\maketag@@@}[1]{\hbox{\m@th\normalsize\normalfont#1}}%
\makeatother

\newcommand{\MYfooter}{\smash{\scriptsize
\hfil\parbox[t][\height][t]{\textwidth}{\centering
This work has been submitted to the IEEE for possible publication. Copyright may be transferred without notice, after which this version may no longer be accessible.}\hfil\hbox{}}}

\makeatletter

\def\ps@headings{%
\def\@oddhead{\MYfooter}
\def\@evenhead{\MYfooter}
\def\@oddfoot{}%
\def\@evenfoot{}}

\def\ps@IEEEtitlepagestyle{%
\def\@oddhead{\MYfooter}%
\def\@evenhead{\MYfooter}%
\def\@oddfoot{}%
\def\@evenfoot{}}

\makeatother

\pagestyle{headings}
\addtolength{\footskip}{0\baselineskip}
\addtolength{\textheight}{0\baselineskip}

\begin{document}


\title{Equivalence of Left- and Right-Invariant  Extended Kalman Filters on Matrix Lie Groups}

\author{Finn G. Maurer, Erlend A. Basso, Henrik M. Schmidt-Didlaukies and Torleiv H. Bryne. 
\thanks{This work was supported by the Research Council of Norway through the Resilience and Reliability Improvement of CORS Network Based Services for Automated and Autonomous Transportation Operations (REACTOR) project (No. 344275) and the European Research Council (ERC) under the European Union's Horizon 2020 research and innovation programme, through the ERC Advanced Grant 101017697-CRÈME.}
\thanks{The authors are with the Department of Engineering Cybernetics, Norwegian University of Science and Technology, NO-7491 Trondheim, Norway {\tt \scriptsize finn.g.maurer@ntnu.no}}}

\maketitle

\begin{abstract}
This paper derives the extended Kalman filter (EKF) for continuous-time systems on matrix Lie groups observed through discrete-time measurements. By modeling the system noise on the Lie algebra and adopting a Stratonovich interpretation for the stochastic differential equation (SDE), we ensure that solutions remain on the manifold.  
The derivation of the filter follows classical EKF principles, naturally integrating a necessary full-order covariance reset post-measurement update. A key contribution is proving that this full-order covariance reset guarantees that the Lie-group-valued state estimate is invariant to whether a left- or right-invariant error definition is used in the EKF. Monte Carlo simulations of the aided inertial navigation problem validate the invariance property and confirm its absence when employing reduced-order covariance resets.

\end{abstract}
\section{Introduction}

The extended Kalman filter (EKF) has been the industry standard for state estimation problems with nonlinear dynamics and/or measurement equations. The recursive algorithm is essential in several applications, including navigation systems, signal processing, robotics, and finance.  At its core, the EKF is designed to estimate the state of a dynamic system by recursively incorporating measurements and predictions while minimizing the impact of sensor noise and dynamic uncertainty.

Matrix Lie groups are differentiable manifolds that are also groups. For a multitude of engineering systems, we can identify their configuration space with a matrix Lie group, such as the orientation of a rigid body. Matrix Lie groups provide a natural framework for incorporating group-specific constraints and symmetries into the filtering process. 
Matrix Lie groups are also desirable due to their computational advantages, since they allow for efficient implementation of group operations, such as exponentials, logarithms and group multiplication.

While the classical formulation of an EKF is done in Euclidean space \cite{mendel_lessons_1995,maybeck_stochastic_1982}, using a Lie-group-based EKF has yielded superior state estimation performance for engineering systems with a configuration space that can be identified with a matrix Lie group. 
An early example of such a filter is \cite{lefferts_kalman_1982}, where an EKF is developed to estimate spacecraft orientation using quaternions. 

A discrete EKF for matrix Lie groups (D-LG-EKF) was introduced in \cite{bourmaud_discrete_2013} and the continuous-discrete equivalent (CD-LG-EKF) in \cite{bourmaud_continuous-discrete_2015}. The multiplicative EKF (MEKF) is derived in \cite{markley_attitude_2003} and more thoroughly developed and analyzed in \cite{sola_quaternion_2017}, where it is presented under the name error state KF (ESKF).  It uses a unit quaternion for global orientation presentation and a three-component orientation error representation. The invariant EKF (IEKF) was developed in \cite{barrauInvariantExtendedKalman2017, barrau_invariant_2018, bonnabel_left-invariant_2008}, designed for systems with specific invariance properties and where the dynamics are so-called group-affine. They show that such systems have state-independent linear error propagation and consistency properties for systems that are not fully observable. 

The filters exhibit distinct approaches in their interpretation of stochastic differential equations (SDEs). Specifically, the CD-LG-EKF employs the Itô interpretation, whereas the MEKF and IEKF utilize the Stratonovich interpretation. As suggested by \cite{van1981ito,kloeden_numerical_1992,Moon_2014,oksendal_stochastic_2003}, the Stratonovich interpretation is preferred when Gaussian white noise, as represented in the system's differential equation, is considered an idealization of an actual physical (colored) noise process. The Stratonovich interpretation is also advantageous because it maintains the chain rule of ordinary calculus and results in a less complex covariance propagation. Moreover, when the state space is a matrix Lie group embedded in Euclidean space and the noise is modeled on the Lie algebra, the Stratonovich interpretation is the only viable approach, as the Itô interpretation would cause the solution to the SDE to deviate from the matrix Lie group submanifold \cite{barrauIntrinsicFilteringLie2015}.

The filters also exhibit varying degrees of generalizability. The CD-LG-EKF is formulated to accommodate any state and measurement on matrix Lie groups. In contrast, the IEKF is specifically derived for group affine systems and measurements that possess invariance properties. The MEKF, on the other hand, is explicitly tailored for estimation problems pertaining to orientation.


Another significant distinction lies in the application (or omission) of a \emph{covariance reset} during the correction step. Since the correction step pertains to the error state, reparameterizing the posterior distribution is essential to transfer information from the error state back to the nominal state, resulting in a covariance reset. The CD-LG-EKF employs a full-order reset, while the MEKF, as presented in \cite{markley_attitude_2003}, omits the reset and the MEKF presented in \cite{sola_quaternion_2017} uses a first-order approximation.  In contrast, the IEKF omits the reset step entirely, which can be interpreted as a zero-order approximation, thereby simplifying the proof of the filter's convergence properties. The study by \cite{gill_full-order_2020} assesses various reset strategies for the orientation estimation problem and concludes that a full-order reset yields superior estimation performance.

Furthermore, works introducing the IEKF and the MEKF also discuss the use of left- or right-invariant estimation errors as a key design consideration. This raises the question of which formulation is more appropriate for different estimation problems. According to \cite{barrau_three_2017}, a left-invariant error should be employed when the measurements are left-invariant, and analogously, a right-invariant error is preferable for right-invariant measurements. Solà \cite{sola_quaternion_2017} presents the MEKF using both left- and right-invariant errors (referred to as local and global errors), and suggests that the MEKF based on right-invariant errors exhibits superior performance. Numerous experimental studies \cite{de_araujo_robust_2023, hartley_contact-aided_2019, potokar_invariant_2021, cohen_navigation_2020, tang_invariant_2023, xu_invariant_2024} suggest that the selection of error parameterization plays a significant role when a reduced-order covariance reset is used. 

The main contributions of this paper are twofold. Firstly, we derive the EKF for continuous-time systems on matrix Lie groups with discrete-time Euclidean measurements. In contrast to \cite{bourmaud_continuous-discrete_2015}, we adopt the Stratonovich interpretation of an SDE and we denote the resulting filter as the Stratonovich continuous-discrete Lie group EKF, or simply the LEKF. Consequently, the continuous-time covariance propagation equation is greatly simplified compared to \cite{bourmaud_continuous-discrete_2015}. Our derivation is based on the derivation of the classical EKF and results in filter equations that naturally include a covariance reset step after each measurement update.
Notably, if the covariance reset step was omitted and if the system was group affine, the filter equations would become identical to an IEKF.

Secondly, we prove that the Lie-group-valued state estimate obtained from the LEKF is invariant with respect to the choice of estimation error. Specifically, LEKFs that employ either left-invariant or right-invariant estimation errors yield identical state estimates. Consequently, there is no inherent advantage in selecting one type of estimation error over the other, except for considerations related to implementation. For this result to hold, it is essential to utilize the full-order covariance reset, as opposed to a finite number of terms in the series representation, which is a common practice in the well-known MEKF and IEKF methodologies.
A Monte Carlo simulation demonstrates the equivalence of left- and right-invariant filters when full-order resets are employed and reveals that this equivalence is lost with reduced-order resets. Furthermore, the Monte Carlo simulation indicates that filters employing the full-order reset exhibit the smallest mean average error.

This paper is organized as follows. \Cref{sec:preliminiaries} introduces the notation, terminology and mathematical preliminaries that will be used throughout this paper. Subsequently, \Cref{sec:ekf-derivations} presents the derivation of the EKF for continuous-time systems on matrix Lie groups with discrete-time measurements. \Cref{sec:ekf-equivalence} details the principal theoretical contribution of this work, demonstrating that the left- and right-invariant LEKFs are equivalent in the sense that they both yield the same state estimate and that the covariance matrix is related through a change of variables.
In \Cref{sec:case-study}, we validate this equivalence through Monte Carlo simulations, additionally highlighting that this equivalence does not hold when a reduced-order covariance reset is applied. Finally, \Cref{sec:conclusions-and-future-work} provides the concluding remarks and outlines potential directions for future research.

\section{Preliminaries}\label{sec:preliminiaries}
In this section, we introduce the mathematical notation and concepts that will be utilized throughout the paper.

\subsection{Notation}
Throughout this paper, the topology considered is the standard Euclidean topology on $\mathbb{R}^n$ or $\mathbb{R}^{m \times n}$. The general linear group is denoted by $\GL(n) \coloneqq \{ g \in \mathbb{R}^{n \times n} : \det(g) \ne 0 \}$. The matrix exponential, $\Exp(\cdot)$, is defined via its power series expansion, and $\Log(\cdot)$ denotes the matrix logarithm, taken as the inverse of the exponential. The identity matrix of size $n$ is denoted by $I_n$. The notation $\bigO(\cdot)$ refers to Big-O, describing standard asymptotic behavior. To keep the notation concise, functions that depend on an exogenous variable $u$ are written as $a_u(\cdot)$ instead of $a(u, \cdot)$ in algorithms and proofs.

    
\subsection{Matrix Lie Groups}
A matrix Lie group $\LieG$ is defined by two properties \cite{hall_lie_2015}:
\begin{enumerate}
    \item $\LieG$ is a relatively closed subset of $\GL(n)$.
    \item $\LieG$ is a matrix subgroup of $\GL(n)$.
\end{enumerate}
According to this definition, every matrix Lie group is a properly embedded submanifold of $\GL(n)$. 
By a relatively closed subset of $\GL(n)$, we mean a set that can be written as $S \cap \GL(n)$ for some closed set $S \subset \Rf^{n \times n}$. If $\LieG$ is a closed subset of $\GL(n)$, then it is also a relatively closed subset of $\GL(n)$. A matrix subgroup $\LieG$ of $\GL(n)$ is a set of invertible $n \times n$ matrices such that $g \in \LieG$ implies $g\inv \in \LieG$ and $g_{1}, g_{2} \in \LieG$ implies $g_{1} g_{2} \in \LieG$. 

The Lie algebra of a matrix Lie group $\LieG$ is denoted $\lieg$, and defined as the real vector space 
\begin{align}
    \lieg \coloneqq \{ X \in \Rf^{n \times n} : t \in \Rf \implies \exp(X t) \in \LieG  \}
\end{align}
equipped with the matrix commutator $[X,Y] \coloneqq XY - Y \mskip-2mu X $. The dimension $k$ of a matrix Lie group $\LieG$ as a manifold is equal to the dimension of $\lieg$ as a vector space. Since $\lieg$ is a real vector space of dimension $k$, there exists an isomorphism $(\, \cdot \,)\twedG  : \Rf^{k} \to \lieg$ with inverse $(\, \cdot \,)\tveeG : \lieg \to \Rf^{k}$. For $g \in \LieG$ and $\xi, \zeta \in \Rf^{k}$, we define the adjoint operators $\AdG : \LieG \times \Rf^{k} \to \Rf^{k}$, $\AdG_{g} \xi \coloneqq (g  \xi  \twedG  g\inv )\tveeG $, and $\adG : \Rf^{k} \times \Rf^{k} \to \Rf^{k}$, $\adG_{\zeta}\xi \coloneqq [ \zeta \twedG,  \xi \twedG ]\tveeG$. We also define $\exp(\mu) \coloneqq \Exp( \mu \twedG)$ and $\log(g) \coloneqq   \Log(g)  \tveeG$. 


\subsection{Derivatives and Group Jacobians}

For matrix Lie groups $\LieG$ of dimension $k$, a mapping $f : \LieG \to \Rf^m$ is said to be smooth if
\begin{equation}
    \xi \mapsto  f(g \expG(  \xi))
\end{equation}
is smooth on a neighborhood of $\xi = 0$ for every $g \in \LieG$. The right derivative $\odif{f} : \LieG \to \mathbb{R}^{m \times k}$ is then defined as
\begin{equation}
\odif{f}(g)  = \frac{\partial}{\partial \xi} \bigg(   f(g \expG(  \xi)) \bigg) \bigg|_{\xi = 0} .
\end{equation}

If $f$ is smooth, then the derivative $\odif{f}$ serves as a local approximation to $f$ in the sense that for fixed $g \in \LieG$, we have $f(g \expG(  \xi  ) ) = f(g)+ \odif{f} (g) \xi  + \bigO( \lvert \xi \rvert^{2} )$. A left derivative can be analogously defined by
\begin{align}\label{eq:right_derivative}
    \bar{\odif*{}} f(g)   = \frac{\partial}{\partial \xi} \bigg(   f(\expG(  \xi)  g  ) \bigg) \bigg|_{\xi = 0} , 
\end{align} 
such that \cite{sola_micro_2021}
\begin{equation}
\label{eq:relation-right-left-diff}
    \bar{\odif*{}}f(g) = \odif f(g) \AdG_{g\inv}. 
\end{equation}
For a smooth function $f : \LieG \times \Rf^{l} \to \Rf^{m}$, we denote by ${\odif*{}}_{1} f$ and $\bar{\odif*{}}_{1} f$ the right and left derivatives of $f$ with respect to its first argument.

The right group Jacobian can be equivalently stated as \cite{bullo_proportional_1995}
\begin{align}
    \begin{aligned}
        \LambdaG_{\zeta} \coloneqq&\,\, \frac{\partial}{\partial \xi} \big( \logG( \expG ( - \zeta ) \expG ( \zeta + \xi ) ) \big)\big\rvert_{\xi = 0}\\ 
        =& \int_{0}^{1} \AdG_{\expG(- s\zeta)} \, \mathrm{d}s 
        = \sum_{k = 0}^{\infty} \frac{(-1)^{k}}{(k+1)!} (\adG_{\zeta})^{k}. \label{eq:right-jacobian}
    \end{aligned}
\end{align}
The left group Jacobian can be stated as
\begin{align}
        \begin{aligned}
            \LambdaGbar_{\zeta} \coloneqq&\,\, \frac{\partial}{\partial \xi} \big( \logG( \expG (  \zeta + \xi ) \expG ( -\zeta  ) ) \big)\big\rvert_{\xi = 0} \\
        =& \int_{0}^{1} \AdG_{\expG(  s\zeta)} \, \mathrm{d}s 
        =  \sum_{k = 0}^{\infty} \frac{1}{(k+1)!} (\adG_{\zeta})^{k}.  \label{eq:left-jacobian}
        \end{aligned}
\end{align}
Clearly, $\LambdaGbar_{\zeta} = \LambdaG_{-\zeta}$. It is also possible to give representations for the inverse of $\LambdaG_{\zeta}$, which is the derivative of the logarithm map. We have
\begin{align}
\label{eq:log-likelihood}
    \begin{aligned}
        \LambdaG_{\zeta}\inv \coloneqq&\,\, \frac{\partial}{\partial \xi} \big( \logG( \expG (  \zeta   ) \expG ( \xi ) ) \big)\big\rvert_{\xi = 0}\\
        =&   \sum_{k = 0}^{\infty} \frac{(-1)^{k} B_{k} }{k!} (\adG_{\zeta})^{k},
        \end{aligned}
\end{align}
and similarly
\begin{align}
    \begin{aligned}
        \LambdaGbar_{\zeta} \inv \coloneqq&\,\,  \frac{\partial}{\partial \xi} \big( \logG( \expG ( \xi  ) \expG (  \zeta  )  ) \big)\big\rvert_{\xi = 0} \\
        =&  \sum_{k = 0}^{\infty} \frac{  B_{k} }{k!} (\adG_{\zeta})^{k} ,
        \end{aligned}
\end{align}
where $B_{k}$ denotes the Bernoulli numbers with $B_{0} = 1$, $B_{1} = -\tfrac{1}{2}$, $B_{3} = \tfrac{1}{6}$, $B_{4} = - \tfrac{1}{30}$, and so on. The following proposition establishes two key properties of the group Jacobians. 
\begin{prop}\label{prop:jac}
    It holds that 
    \begin{align}
    \AdG_{\expG(\zeta)} \LambdaG_{\zeta} &= \LambdaG_{ -\zeta} , \label{eq:Adexpxi-Lambdaxi} \\ 
     \AdG_{g}  \LambdaG_{ \zeta } &= \LambdaG_{ \AdG_{g} \zeta }  \AdG_{g} . \label{eq:Adg-Lambda}
\end{align}
\end{prop}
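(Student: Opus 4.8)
The plan is to prove both identities directly from the integral representations of the group Jacobian, using the standard composition rule for the adjoint map, $\AdG_{g_1 g_2} = \AdG_{g_1}\AdG_{g_2}$, together with $\AdG_{\expG(\zeta)}$ being the exponential of $\adG_\zeta$. These facts follow from the matrix-group definitions given in the preliminaries, where $\AdG_g\xi = (g\,\xi\twedG g\inv)\tveeG$ and $\exp(\mu)=\Exp(\mu\twedG)$, so no new machinery is needed.

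For \eqref{eq:Adexpxi-Lambdaxi}, I would start from $\LambdaG_\zeta = \int_0^1 \AdG_{\expG(-s\zeta)}\,\mathrm{d}s$ and move the constant factor $\AdG_{\expG(\zeta)}$ inside the integral. Using the composition rule, $\AdG_{\expG(\zeta)}\AdG_{\expG(-s\zeta)} = \AdG_{\expG(\zeta)\expG(-s\zeta)} = \AdG_{\expG((1-s)\zeta)}$, since $\expG(\zeta)$ and $\expG(-s\zeta)$ commute (they are exponentials of scalar multiples of the same Lie algebra element). Hence $\AdG_{\expG(\zeta)}\LambdaG_\zeta = \int_0^1 \AdG_{\expG((1-s)\zeta)}\,\mathrm{d}s$, and the substitution $s' = 1-s$ turns this into $\int_0^1 \AdG_{\expG(s'\zeta)}\,\mathrm{d}s' = \LambdaGbar_\zeta = \LambdaG_{-\zeta}$, the last equality being the one already noted in the text. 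Alternatively, the same identity drops out of the series representation by noting $\AdG_{\expG(\zeta)} = \sum_j \frac{1}{j!}(\adG_\zeta)^j$ and matching the Cauchy product against $\sum_k \frac{(-1)^k}{(k+1)!}(\adG_\zeta)^k$, but the integral argument is cleaner.

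For \eqref{eq:Adg-Lambda}, the key auxiliary fact is the conjugation identity $\AdG_g\,\adG_\zeta\,\AdG_{g}\inv = \adG_{\AdG_g\zeta}$, which follows because $\AdG_g$ is a Lie algebra automorphism: it is linear and preserves the bracket, so it intertwines the adjoint representation. Then $\AdG_g\,\AdG_{\expG(-s\zeta)}\,\AdG_g\inv = \AdG_g\,\exp(-s\,\adG_\zeta)\,\AdG_g\inv = \exp(-s\,\adG_{\AdG_g\zeta}) = \AdG_{\expG(-s\,\AdG_g\zeta)}$. Inserting this into $\AdG_g\LambdaG_\zeta\AdG_g\inv = \int_0^1 \AdG_g\AdG_{\expG(-s\zeta)}\AdG_g\inv\,\mathrm{d}s = \int_0^1 \AdG_{\expG(-s\,\AdG_g\zeta)}\,\mathrm{d}s = \LambdaG_{\AdG_g\zeta}$, and multiplying on the right by $\AdG_g$ gives $\AdG_g\LambdaG_\zeta = \LambdaG_{\AdG_g\zeta}\AdG_g$, as claimed. (Equivalently one can verify $\AdG_g\expG(\mu)\twedG\,\AdG_g\inv$ directly at the matrix level, but the automorphism viewpoint is the quickest route.)

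The only real obstacle is bookkeeping: one must be careful that $\AdG$ here acts on $\Rf^k$ rather than on matrices, so the "conjugation" $\AdG_g\,\adG_\zeta\,\AdG_g\inv$ is a composition of linear maps on $\Rf^k$ and the identity $\AdG_g\adG_\zeta\AdG_g\inv = \adG_{\AdG_g\zeta}$ should be checked by unwinding the definitions $\adG_\zeta\xi = [\zeta\twedG,\xi\twedG]\tveeG$ and the homomorphism property of $(\cdot)\twedG$ under conjugation by $g$. Once that lemma is in hand, both parts of the proposition are a one-line substitution in the integral formula, and the interchange of the (bounded, matrix-valued, continuous) integrand with the linear maps $\AdG_g$, $\AdG_g\inv$ is elementary.
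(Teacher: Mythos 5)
Your proposal is correct and follows essentially the same route as the paper: both identities are obtained from the integral representation $\LambdaG_{\zeta}=\int_{0}^{1}\AdG_{\expG(-s\zeta)}\,\mathrm{d}s$, using the substitution $\bar{s}=1-s$ for \eqref{eq:Adexpxi-Lambdaxi} and inserting $\AdG_{g}(\cdot)\AdG_{g}\inv$ under the integral for \eqref{eq:Adg-Lambda}. The only cosmetic difference is that you identify the conjugated integrand via $\AdG_{g}\exp(-s\adG_{\zeta})\AdG_{g}\inv=\exp(-s\adG_{\AdG_{g}\zeta})$, whereas the paper conjugates at the group level, $\AdG_{g}\AdG_{\expG(-s\zeta)}\AdG_{g}\inv=\AdG_{g\expG(-s\zeta)g\inv}=\AdG_{\expG(-s\AdG_{g}\zeta)}$; these are equivalent one-line facts.
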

\begin{proof}
    We have
    \begin{align*}
        \AdG_{\expG (\zeta)} \LambdaG_{\zeta} &= \AdG_{\expG (\zeta)} \int_{0}^{1} \AdG_{\expG(- s\zeta)} \, \mathrm{d}s \\
        & \hspace{-0.5cm}= \int_{0}^{1} \AdG_{\expG( (1 - s) \zeta)} \, \mathrm{d}s 
        = \int_{0}^{1} \AdG_{\expG( \bar{s} \zeta)} \, \mathrm{d} \bar{s} = \LambdaG_{-\zeta},
    \end{align*}
    where we performed the substitution $\bar{s} = 1 - s$. To show the second claim, note that
    \begin{align*}
        \AdG_{g} \LambdaG_{\zeta} &=   \AdG_{g} \int_{0}^{1} \AdG_{\expG(- s\zeta)} \, \mathrm{d}s \\
        &= \int_{0}^{1} \AdG_{g} \AdG_{\expG(  - s  \zeta)} \AdG_{g}\inv  \, \mathrm{d}s \, \AdG_{g} \\
        &= \int_{0}^{1} \AdG_{g \expG( - s \zeta) g\inv} \, \mathrm{d} s \, \AdG_{g} \\
        &= \int_{0}^{1} \AdG_{ \expG( - s \AdG_{g} \zeta)  } \, \mathrm{d} s \, \AdG_{g} 
        = \LambdaG_{\AdG_{g} \zeta} \AdG_{g}.  \tag*{\qedhere} 
    \end{align*}
\end{proof}

\subsection{Extended Concentrated Gaussians} 
Following \cite{ge_equivariant_2022} and \cite{ge_geometric_2024}, we define a random variable $g$ on a Lie group $\LieG$ using an extended concentrated Gaussian distribution defined by
\begin{align}
    g &\coloneqq g_o \expG (\xi) \label{eq:body-cdg},
\end{align}
where $g_o\in \LieG$ is a noise-free variable referred to as the reference point, and 
$\xi \sim \mathcal{N}(\mu, P)$ is a $k$-dimensional Gaussian where $P$ is $k\times k$ positive definite covariance matrix, expressed in the body frame. Note that the probability density function (PDF) of $g$, which we denote $p(g) = \mathrm{N}(g;g_o, \mu, P)$, is induced from $\mathcal{N}(\mu,P)$ which is defined on the Lie algebra $\lieg$. In the case where $\mu=0$, the distribution of $g$ becomes just the concentrated Gaussian. In that case, the (right) mean and covariance of $g$ correspond to $g_o$ and $P$, respectively, as derived in \cite{barfoot_associating_2014}. 

The extended concentrated Gaussian can also be defined by expressing the local error distribution in the spatial frame:
\begin{equation}
    g = \expG(\bar \xi)g_o, \label{eq:spatial-cdg}
\end{equation}
where $\bar\xi \sim \mathcal{N}(\bar \mu, \bar P) $. We denote the spatial PDF by $p(g)=\bar{\mathrm{N}}(g;g_o, \bar\mu, \bar P)$. The spatial representation is related to the body representation by $\bar\mu = \AdG_{g_o}\mu$ and $\bar P = \AdG_{g_o}P\AdG_{g_o}\tp$. 
The following Lemma is a spatial counterpart of \cite[Lemma 1]{ge_geometric_2024}.
\begin{lem}\label{lem:reset-spatial}
    Given an extended concentrated Gaussian distribution $p(g) = \bar{\mathrm{N}}(g; x_1, \bar \mu_1, \bar \Sigma_1)$ on $\LieG$ and a point $x_2 \in \LieG$, then the concentrated Gaussian $q(g) = \bar{\mathrm{N}}(g;  x_2, \bar \mu_2, \bar \Sigma_2)$ with parameters
    \begin{align}
        \bar \mu_{2} &= \log ( \exp(\bar \mu_1)x_1 x_2\inv),\\
        \bar \Sigma_2 &= \bar{\mathrm{J}}_{\bar \mu_2}\inv \bar{\mathrm{J}}_{\bar \mu_1} \bar \Sigma_1 \bar{\mathrm{J}}_{\bar \mu_1}\tp \bar{\mathrm{J}}_{\bar \mu_2}\invtp,
    \end{align}
    minimizes the Kullback-Leibler divergence between $p(g)$ and $q(g)$ up second-order linearization error.  
\end{lem}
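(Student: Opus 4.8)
The plan is to exploit the reparameterization invariance of the Kullback--Leibler divergence together with the standard characterization of the best Gaussian approximation as a moment match. Introduce the new spatial error $\bar\xi_2 \coloneqq \log(g\,x_2\inv)$, so that, by the defining relation \eqref{eq:spatial-cdg}, $q$ is exactly the law of $g$ when $\bar\xi_2 \sim \mathcal{N}(\bar\mu_2,\bar\Sigma_2)$, whereas under $p$ we have $g = \exp(\bar\xi_1)x_1$ with $\bar\xi_1\sim\mathcal{N}(\bar\mu_1,\bar\Sigma_1)$ and hence $\bar\xi_2 = \phi(\bar\xi_1)$ for the chart map $\phi(\zeta)\coloneqq \log(\exp(\zeta)\,x_1 x_2\inv)$. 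Since $D_{\mathrm{KL}}(p\,\|\,q)$ is unchanged by the smooth change of variables $g \leftrightarrow \bar\xi_2$, minimizing it over $(\bar\mu_2,\bar\Sigma_2)$ is equivalent to minimizing $D_{\mathrm{KL}}\!\big(\phi_\ast\mathcal{N}(\bar\mu_1,\bar\Sigma_1)\,\big\|\,\mathcal{N}(\bar\mu_2,\bar\Sigma_2)\big)$, whose exact minimizer is $\bar\mu_2 = \mathbb{E}[\phi(\bar\xi_1)]$ and $\bar\Sigma_2 = \mathrm{Cov}[\phi(\bar\xi_1)]$. The qualifier ``up to second-order linearization error'' then amounts to replacing $\phi$ by its first-order Taylor polynomial about $\bar\mu_1$, so the work reduces to evaluating $\phi(\bar\mu_1)$ and its Jacobian $D\phi(\bar\mu_1)$.

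The value is immediate: $\phi(\bar\mu_1) = \log(\exp(\bar\mu_1)x_1 x_2\inv)$, which is the stated $\bar\mu_2$, and in particular $\exp(\bar\mu_2) = \exp(\bar\mu_1)x_1 x_2\inv$. For the Jacobian I would set $\bar\xi_1 = \bar\mu_1 + \epsilon$ and expand in $\epsilon$ using the two spatial-frame perturbation identities that drop out of \eqref{eq:left-jacobian} and the defining formula for $\bar{\mathrm{J}}_{\bar\mu_2}\inv$, namely $\exp(\bar\mu_1+\epsilon) = \exp\!\big(\bar{\mathrm{J}}_{\bar\mu_1}\epsilon + \bigO(|\epsilon|^2)\big)\exp(\bar\mu_1)$ and $\log(\exp(\eta)\exp(\bar\mu_2)) = \bar\mu_2 + \bar{\mathrm{J}}_{\bar\mu_2}\inv\eta + \bigO(|\eta|^2)$. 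Substituting the first into $\phi$, collapsing $\exp(\bar\mu_1)x_1 x_2\inv$ to $\exp(\bar\mu_2)$, and then applying the second yields $\phi(\bar\mu_1+\epsilon) = \bar\mu_2 + \bar{\mathrm{J}}_{\bar\mu_2}\inv\bar{\mathrm{J}}_{\bar\mu_1}\,\epsilon + \bigO(|\epsilon|^2)$, so $D\phi(\bar\mu_1) = \bar{\mathrm{J}}_{\bar\mu_2}\inv\bar{\mathrm{J}}_{\bar\mu_1}$. Propagating $\bar\Sigma_1$ through this linear map gives $\bar\Sigma_2 = D\phi(\bar\mu_1)\,\bar\Sigma_1\,D\phi(\bar\mu_1)\tp = \bar{\mathrm{J}}_{\bar\mu_2}\inv\bar{\mathrm{J}}_{\bar\mu_1}\bar\Sigma_1\bar{\mathrm{J}}_{\bar\mu_1}\tp\bar{\mathrm{J}}_{\bar\mu_2}\invtp$, which is the claim.

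The main obstacle is less conceptual than notational: one must consistently use the \emph{left} group Jacobian $\bar{\mathrm{J}}$ throughout, since in the spatial parameterization the error multiplies the reference point on the left, and it is easy to pick up a spurious $\AdG$ factor by accidentally invoking a body-frame identity. A clean cross-check is to instead transport the body-frame result \cite[Lemma 1]{ge_geometric_2024} through the spatial--body relations $\bar\mu = \AdG_{g_o}\mu$, $\bar P = \AdG_{g_o}P\,\AdG_{g_o}\tp$ and the Jacobian identities of \Cref{prop:jac}: the adjoints cancel pairwise and the same two formulas reappear. A minor side condition worth noting is that $x_1 x_2\inv$ must lie close enough to the identity for $\bar\mu_2 = \log(\exp(\bar\mu_1)x_1 x_2\inv)$ to be well-defined, which is automatic in the concentrated-Gaussian regime in which the reset is applied.
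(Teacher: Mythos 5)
Your proof is correct and follows essentially the same route as the paper, whose own proof of \cref{lem:reset-spatial} simply defers to the body-frame argument of the cited reference ``using the spatial parameterization and the left group Jacobian''; your moment-matching-plus-linearization of $\phi(\zeta)=\log(\exp(\zeta)\,x_1 x_2\inv)$ is exactly those steps made explicit. The key computation $D\phi(\bar\mu_1)=\bar{\mathrm{J}}_{\bar\mu_2}\inv\bar{\mathrm{J}}_{\bar\mu_1}$ and the resulting covariance transport match the stated formulas, and your noted side condition on $x_1x_2\inv$ lying in the domain of $\log$ is the right caveat.
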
 
The proof follows the exact same steps as \cite[Lemma 1]{ge_geometric_2024}, using \eqref{eq:spatial-cdg} instead of \eqref{eq:body-cdg} and using the left group Jacobian.

\section{Extended Kalman Filters on Matrix Lie Groups}\label{sec:ekf-derivations}
In this section, we derive the EKF for continuous-time systems defined on matrix Lie groups.
We adopt the Stratonovich interpretation of an SDE as it allows us to define the system noise on the Lie algebra, while ensuring that the solution of the SDE does not leave the matrix Lie group. Our derivation of the filter is grounded in foundational works on extended Kalman filtering, with a particular emphasis on the necessity of the covariance reset step. 

\subsection{System Dynamics and Measurements}
Let the system state $g$ be an element of the matrix Lie group $G$ of dimension $k$. The stochastic system dynamics and measurements are given by
\begin{subequations}
\label{eq:stochastic-system}
\begin{align}
        \dot{g} &= g\underbrace{[a(g,u) + B(g,u) \mu]\twedG}_{V^b}
     \!=\! \underbrace{[\bar{a}(g,u) + \bar{B}(g,u) \mu]\twedG}_{V^s} g \label{eq:sde-g-dot}\\
    y &= c(g, u) + D(g,u)\eta
\end{align}
\end{subequations}
where $u \in \mathbb{R}^l$ is the input, $\mu\in\mathbb{R}^s$ is a Gaussian white process with a covariance matrix $Q$, $\eta \in \Rf^{r}$ is Gaussian noise with a covariance matrix $N$ and $y \in \Rf^m$ is a discrete-time measurement. We assume that $a : \LieG \times \Rf^{l} \to \Rf^{k}$, $B : \LieG \times \Rf^{l} \to \Rf^{k \times s}$, $c : \LieG  \times \Rf^l \to \Rf^m$  and $D: \LieG \times \Rf^l \to \Rf^{m\times r}$ are smooth mappings. The stochastic system can be expressed using the body velocity $V^b\in\lieg$ or the spatial velocity $V^s\in \lieg$ \cite{bulloProportionalDerivativePD1995}. 
It follows immediately from the above that
\begin{align}
    \begin{aligned}
     \bar{a}(g,u) &= \AdG_{g} a(g,u), \quad 
     \bar{B}(g,u) = \AdG_{g} B(g,u) .
    \end{aligned}
\end{align}
In this article, we interpret the SDE \eqref{eq:sde-g-dot} in the Stratonovich sense. 
Note that the SDE \eqref{eq:sde-g-dot} is interpreted according to Itô calculus in \cite{bourmaud_continuous-discrete_2015}, where they also assume that  $B(g,u) = I$. However, the Itô interpretation may cause the solution of the SDE to depart from the submanifold $\LieG$, as elaborated in \cite{barrauIntrinsicFilteringLie2015}. 

\subsection{Deriving EKFs on Matrix Lie Groups}
The left LEKF will be derived using the left-invariant error and the body velocity. The derivation of the filter is similar to the classical EKF, as presented in\cite[Lesson 24]{mendel_lessons_1995} and \cite[Chapter 9.5]{maybeck_stochastic_1982}. We define the nominal system model as
\begin{subequations}\label{eq:nominal-state-sys}
\begin{align}
    \dot g_n &\coloneqq g_n[a(g_n, u)]\twedG, \label{eq:nominal-state-propagation}\\
    y_{n} &\coloneqq c(g_{n}, u), 
\end{align}
\end{subequations}
where the input $u$ is assumed to be known. The left-invariant state error is defined as  $g_e\coloneqq g_n\inv g$ and the measurement error as $y_e \coloneqq y- y_n$. 
The dynamics of the state error follow directly from \eqref{eq:sde-g-dot} and \eqref{eq:nominal-state-propagation}, 
\begin{align}\label{eq:error-dynamics}
    \begin{aligned}
        \dot{g}_{e} &= g_{e}[ a(g_{n} g_{e}, u)  -  \AdG_{g_{e}} \inv a(g_{n}, u)]\twedG \\
        &\quad + g_{e}[B(g_{n}g_{e}, u) \mu ]\twedG.
    \end{aligned}
\end{align}
Assuming that $g_{e}$ is sufficiently close to the identity, a local error state can be defined by $\xi \coloneqq \log_\LieG (g_e)$ such that $g = g_n \expG(\xi)$. Utilizing the chain rule and \eqref{eq:log-likelihood}, we find that it evolves according to
\begin{align}\label{eq:local-error-dynamics-devl}
    \begin{aligned}
        \dot{\xi} 
        &= \LambdaG_{\xi}\inv [ a(g_{n}\expG(\xi),u) -  \AdG_{\expG(\xi)} \inv  a(g_{n},u)  ] \\
        &\quad +  \LambdaG_{\xi}\inv   B(g_{n}\expG(\xi),u)\mu.
    \end{aligned}
\end{align}

We now proceed by extracting the first-order terms in $(\xi, \mu)$ from \eqref{eq:local-error-dynamics-devl}. Since
\begin{align}\label{eq:local-error-dynamics-devl-2}
        \begin{aligned}
            \LambdaG_{\xi}\inv &= I + \tfrac{1}{2} \adG_{\xi}  + \bigO(\lvert \xi \rvert^{2}), \\
            a(g_{n}\expG(\xi),u) &= a(g_{n},u) + \odif*{1} a(g_{n},u)\xi + \bigO(\lvert \xi \rvert^{2}), \\
            \AdG_{\expG(-\xi)} &= \expG(- \adG_{\xi}  ) \\ &= I - \adG_{\xi} + \bigO(\lvert \xi \rvert^{2}),
        \end{aligned}
    \end{align}
it follows that the first line of the right-hand side of \eqref{eq:local-error-dynamics-devl} can be written as $(\odif*{1} a(g_{n},u) - \adG_{a (g_{n},u) })\xi + \bigO(\lvert  \xi   \rvert^{2})$. Furthermore, from smoothness of $B$ and the first order approximation to $\LambdaG_{\xi}$ in \eqref{eq:local-error-dynamics-devl-2}, it follows that 
\begin{align}
    \LambdaG_{\xi}\inv   B(g_{n}\expG(\xi),u)\mu = B(g_{n} ,u)\mu + \bigO(\lvert (\xi, \mu) \rvert^{2} ).
\end{align}
The error dynamics \eqref{eq:local-error-dynamics-devl} can therefore be written as
\begin{align}
    \dot\xi= A(g_{n},u) \xi + B(g_{n},u)\mu + \bigO(|(\xi,\mu)|^{2}),
\end{align}
where $A(g_{n},u) \coloneqq \odif*{1} a(g_{n},u) - \ad_{\LieG}(a(g_{n}),u )$. The leading-order measurement error $y_e$ is obtained similarly,
\begin{equation}
    \begin{aligned}
        y_e 
        &= c(g_n\expG(\xi),u) + D(g_n\expG(\xi),u)\eta - c(g_n,u)\\
        &= C(g_n,u) \xi +  D(g_n, u)\eta + \bigO( \lvert (\xi, \eta) \rvert^{2} ),
    \end{aligned}
\end{equation}
where $C(g_{n},u) \coloneqq \odif* {1} c(g_n,u)$. In conclusion, the error state system can be expressed as
\begin{subequations}\label{eq:local-error-dynamics}
\begin{align}
    \dot{\xi} &= A(g_{n},u) \xi + B(g_{n},u) \mu + \bigO(\lvert (\xi , \mu)\rvert^{2} ),\\
    y_e &= C(g_{n},u)\xi +D(g_{n},u)\eta + \bigO(\lvert (\xi , \eta )\rvert^{2} ).
\end{align}
\end{subequations}
%
When neglecting higher-order terms, the local error state system \eqref{eq:local-error-dynamics} can be characterized as a stochastic linear time-varying system. This allows us to apply the linear Kalman filter \cite{kalman_new_1960}:
\begin{align}
    \begin{aligned}
        \dot{\xi}_{o} &= A(g_{n},u) \xi_o, \\
    \dot{P} &= A(g_{n}, u) P + P A(g_{n},u)\tp + B(g_{n},u) Q B(g_{n}, u)\tp, \\
    \xi_o^{\oplus} &= \xi_o + K(y_e - C(g_{n},u)\xi_{o}), \\  
    P^{\oplus} &= (I - K C(g_{n},u)) P,
    \end{aligned}
    \raisetag{2em}
\end{align}
where $\xi_o$ is the estimated local error state and
\begin{align}
\begin{aligned}\label{eq:filter-gain}
    K &= P C(g_{n},u)\tp (C(g_{n},u) P C(g_{n},u)\tp \\
     & \quad  + D(g_{n},u) N D(g_{n},u)\tp)\inv.
\end{aligned} 
\end{align}

The filter's output would be the optimal estimate of the local error state $\xi$, denoted $\xi_{o}$, which we could add to the nominal trajectory $g_{n}$. This results in a state estimate given by
\begin{equation}
    \label{eq:hat-g-prop}
    g_o\coloneqq g_{n}\expG(\xi_{o}). 
\end{equation}
The connection to the classical linearized Kalman filter \cite[p. 42]{maybeck_stochastic_1982} becomes evident when noting that, for the Lie group $\Rf^k$ with vector addition as the group operation, the estimate simplifies to $g_o= g_{n} + \xi_{o}$.

The EKF is constructed by linearizing the system matrices around the estimate $g_{o}$ once it has been computed.
We proceed as in \cite[Chapter 9.5]{maybeck_stochastic_1982}, assuming the nominal state has been initialized such that our best guess is $\xi_{o} = 0$. Assume now that the filter just has processed measurement $y$ to obtain $\xi_o^\oplus$ and $P^\oplus$, so that our posterior distribution is given by the extended concentrated Gaussian
\begin{equation}
    g^\oplus := g_{n} \expG(\xi^\oplus), 
\end{equation}
where $\xi^\oplus \sim \mathcal{N}(\xi_{o}^\oplus, P^\oplus)$.
As argued for in \cite[Chapter 9.5]{maybeck_stochastic_1982}, the goal is to merge the local error state estimate, $\xi_{o}^\oplus$, into the nominal state. This aims to generate an improved nominal trajectory, thereby strengthening the assumption that deviations from the nominal trajectory remain sufficiently small to justify the use of linear perturbation methods. Consequently, we define the posterior nominal state as
\begin{equation}
\label{eq:g_{n}_update}
    g_{n}^+ := g_o^+ = g_{n} \expG(\xi_o^\oplus).
\end{equation}
The final step is to ensure that the re-parameterized posterior distribution is in some sense close to the original. Denote the re-parameterized state posterior as $g^+ = g_n^+\expG(\xi^+)$ where $\xi^+ \sim \mathcal{N}(\xi_o^+, P^+)$, we try to find the parameters $\xi_o^+, P^+$ minimizing
\begin{equation}
    D_\text{KL}(\mathrm{N}(g^\oplus; g_n,\xi^\oplus_o, P^\oplus)|\mathrm{N}(g^+; g_n^+, \xi^+, P^+)), \label{eq:DKL}
\end{equation}
where $D_\text{KL}(\cdot|\cdot)$ denotes the Kullback-Leibler divergence. This change of reference point in the extended concentrated Gaussian involves relocating the reference point from  $g_{n}$ to $g_{n}^+$. In \cite{ge_geometric_2024}, formulas are established to calculate the mean and covariance of the re-parameterized posterior distribution, $g^+$, that minimizes \eqref{eq:DKL}.
This minimization process yields the optimal parameters
\begin{equation}
    \begin{aligned}
        \xi_{o}^+ &= 0, \quad P^+ = \LambdaG_{\xi_{o}^\oplus}P^\oplus(\LambdaG_{\xi_{o}^\oplus})\tp . \label{eq:P+_formula}
    \end{aligned}
\end{equation}
This step, termed the \emph{covariance reset}, leads to a posterior distribution, $g_{n}^+\exp(\xi^+)$, again a concentrated Gaussian with a local error state estimate equal to zero. Consequently, the local error state estimate is always zero except during the intermediate update step. Therefore, by directly applying the Kalman update to the nominal state $g_{n}$:
\begin{equation}
    \begin{aligned}
    g_{n}^+ &= g_{n} \expG(\xi_o^\oplus) \\
    &= g_{n}\expG(0 + K(y_e - C(g_{n},u)0)) \\
    &= g_{n}\expG(Ky_e) = g_{n}\expG(\zeta),
    \end{aligned}
\end{equation}
where $\zeta \coloneqq Ky_e$, we ensure that the local error state remains zero at all times, rendering it a useless variable for implementation purposes. As a result, the nominal state will always equal the current state estimate, $g_n = g_o$, which is why we propagate and update $g_o$ directly in the LEKF algorithm summarized in \Cref{alg:lie-ekf-left}.

\begin{algorithm}[tb]
	\caption{LEKF from Left-Invariant Error}\label{alg:lie-ekf-left}
	\begin{algorithmic}[0]
        \State \textbf{Propagation:} Integrate the differential equations 
        \begin{align*}
            \dot{g}_{o} &= g_{o}[a_{u}(g_{o})]\twedG\\
            \dot{P} &= A_{u}(g_{o})P + P A_{u}(g_{o})^\top + B_{u}(g_{o}) Q B_{u}(g_{o})^\top
        \end{align*}
        \Statex where
        \begin{equation*}
            A_u(g_{o}) = \odif a_u(g_{o}) - \ad_{\LieG}(a_u (g_{o}) )
        \end{equation*}
        \State \textbf{Measurement update: }
        For a measurement $y$, compute
        \begin{align*}
             K &= PC_u(g_o)\tp (C_u(g_o)PC_u(g_o)\tp + D_u(g_o)ND_u(g_o)\tp)\inv \\
            \zeta &= K (y-c_u(g_o))\\
            g_{o}^{+} &= g_{o} \expG(\zeta)\\ 
            P^{+} &= \LambdaG_\zeta (I - K C_u(g_o)) P \LambdaG_\zeta\tp
        \end{align*}
        \Statex where
        \begin{equation*}
            C_u(g_{o}) = \odif c_u(g_{o})
        \end{equation*}
	\end{algorithmic}
\end{algorithm}

An LEKF can also be derived using the right-invariant error. In the following, we will provide a brief summary of the derivation process for this filter. Quantities associated with the right LEKF will be denoted with a bar. The nominal system model employed is
\begin{subequations}
    \begin{align}
        \dot{\bar{g}}_{n} &\coloneqq [\bar{a}(\bar{g}_{n},u) ]\twedG \bar{g}_{n}, \\
        \bar{y}_{n} &\coloneqq c(\bar{g}_{n},u).
    \end{align}
\end{subequations}
The right-invariant error is defined by $\bar{g}_{e} \coloneqq g \bar{g}_{n}\inv$, and the error dynamics become
\begin{align}
    \dot{\bar{g}}_{e} = [\bar{a}(\bar{g}_{n}\bar{g}_{e},u) - \AdG_{\bar{g}_{e}} \bar{a}(\bar{g}_{n}, u) + \bar{B}(\bar{g}_{n}\bar{g}_{e},u)\mu ]\twedG.
\end{align}
The corresponding local error state is defined by $\bar{\xi} \coloneqq \logG (\bar{g}_{e})$, with dynamics
\begin{align}
   \begin{aligned}
        \dot{\bar{\xi}}  &= \bar{J}_{\bar{\xi}}\inv [\bar{a}(\expG(\bar{\xi})\bar{g}_{n},u) - \AdG_{\expG(\bar{\xi})} \bar{a}(\bar{g}_{n}, u) ] \\
    &\quad +   \bar{J}_{\bar{\xi}}\inv  \bar{B}(\bar{g}_{n}\bar{g}_{e},u)\mu.
   \end{aligned}
\end{align}
Similar considerations as in the derivation of the left-invariant filter then give the leading order terms
\begin{subequations}
\begin{align}\label{eq:right-local-error-dynamics}
    \dot{\bar{\xi}} &= \bar{A}(\bar{g}_{n}, u) \bar{\xi} + \bar{B}(\bar{g}_{n},u)\mu + \bigO( \lvert ( \bar{\xi},\mu) \rvert^{2}), \\ 
    \bar{y}_{e} &= \bar{C}(\bar{g}_{n},u)\bar{\xi} + D(\bar{g}_{o},u)\eta + \bigO( \lvert ( \bar{\xi},\eta) \rvert^{2}),
\end{align}
\end{subequations}
where $\bar{A}(g_{o},u) \coloneqq  \bar{\odif*{}}_{1} \bar{a}(\bar{g}_{n},u) + \adG_{\bar{a}(\bar{g}_{n},u)}$ and $ \bar{C}(\bar{g}_{n},u) \coloneqq \bar{\odif*{}}_{1} c(\bar{g}_{n},u)$. 
The local error state estimate $\bar{\xi}_{o}$ and filter covariance $\bar{P}$ then evolve according to
\begin{align}
    \begin{aligned}
        \dot{\bar{\xi}}_{o} &= \bar{A}(\bar{g}_{n},u) \bar{\xi}_{o}, \\
    \dot{\bar{P}} &= \bar{A}(\bar{g}_{n}, u) \bar{P} + \bar{P} \bar{A}(\bar{g}_{n},u)\tp + \bar{B}(\bar{g}_{n},u) Q \bar{B}(\bar{g}_{n}, u)\tp, \\
    \bar{\xi}_{o}^{\oplus} &= \bar{\xi}_{o} + \bar{K}(\bar{y}_e - \bar{C}(\bar{g}_{n},u)\bar{\xi}_{o}) ,\\  
    \bar{P}^{\oplus} &= (I - \bar{K} \bar{C}(\bar{g}_{n},u)) \bar{P}, 
    \end{aligned} \raisetag{2em}
\end{align}
where the filter gain $\bar{K}$ is computed as in \eqref{eq:filter-gain} from the corresponding barred quantities. The posterior distribution is now given by the extended concentrated Gaussian
\begin{align}
    g^\oplus \coloneqq \expG(\bar{\xi}^\oplus) \bar{g}_{n}, 
\end{align}
where $\bar{\xi}^\oplus \sim \mathcal{N}(\bar{\xi}_{o}^\oplus, \bar{P}^\oplus)$.
The posterior nominal state is now defined by
\begin{align}
    \bar{g}_{n}^+ \coloneqq g_{o}^+ = \exp(\bar{\xi}_{o}^{\oplus}) \bar{g}_n,
\end{align}
and a corresponding new local posterior $\bar{\xi}^+ \sim \mathcal{N}(\bar{\xi}_{o}^+, \bar{P}^+)$ is found by minimizing
\begin{equation}
    D_\text{KL}(\bar{\mathrm{N}}(\bar g^\oplus; \bar g_n,\bar \xi^\oplus_o, \bar P^\oplus)|\bar{\mathrm{N}}(\bar g^+; \bar g_n^+, \bar \xi^+, \bar P^+)).
\end{equation}
 From \Cref{lem:reset-spatial}, it follows that the optimal parameters are given by
\begin{align}
    \bar{\xi}_{o}^+ = 0, \quad \bar{P}^+ = \bar{J}_{\bar{\xi}_{o}^\oplus} \bar{P}^\oplus (\bar{J}_{\bar{\xi}_{o}^\oplus})\tp.
\end{align}




The resulting LEKF derived from the right-invariant error is presented in \Cref{alg:lie-ekf-right}.
\begin{algorithm}[h]
	\caption{LEKF from Right-Invariant Error}\label{alg:lie-ekf-right}
	\begin{algorithmic}[0]
        \State \textbf{Propagation:} Integrate the differential equations 
        \begin{align*}
            \dot{\bar{g}}_{o} &= [\bar{a}_{u}(\bar{g}_{o})]\twedG \bar{g}_{o}\\
            \dot{\bar{P}} &= \bar{A}_{u}(\bar{g}_{o})\bar{P} + \bar{P} \bar{A}_{u}(\bar{g}_{o})\tp + \bar{B}_{u}(\bar{g}_{o}) Q\bar{B}_{u}(\bar{g}_{o})\tp 
        \end{align*}
        \Statex where
        \begin{equation*}
            \bar{A}_{u}(\bar{g}_{o}) =  \bar{\odif*{}} \bar a_u(\bar g_o)+ \ad_{\bar{a}_{u}(\bar{g}_{o})}
        \end{equation*}
        \State \textbf{Measurement update: }
        For a measurement $y$, compute
        \begin{align*}
            \bar{K} &= \bar{P} \bar{C}_{u}(\bar g_{o})\tp(\bar{C}_{u}(\bar{g}_{o}) \bar{P} \bar{C}_{u}(\bar{g}_{o})\tp + D_u(\bar g_o)ND_u(\bar g_o)\tp)\inv \\
            \bar{\zeta} &= \bar{K}(y-c_u(\bar g_o))\\
            \bar{g}_{o}^{+} &= \expG( \bar{\zeta} )  \bar{g}_{o}\\
            \bar{P}^{+} &=  \bar{\LambdaG}_{\bar{\zeta}}  (I - \bar{K}\bar{C}_{u}(\bar{g}_{o}) ) \bar{P} \bar{\LambdaG}_{\bar{\zeta}}\tp
        \end{align*}
        \Statex where
        \begin{equation*}
            \bar{C}_{u}(\bar{g}_{o}) = \bar{\odif*{}} c_u(\bar g_o)
        \end{equation*}
	\end{algorithmic}
\end{algorithm}
\section{Equivalence of the Left- and Right-Invariant Formulations}\label{sec:ekf-equivalence}

This section presents the main theoretical result of this paper. Namely, that the EKFs utilizing a left-invariant and right-invariant error in \Cref{alg:lie-ekf-left} and \Cref{alg:lie-ekf-right}, respectively, both yield the same state estimate. In other words, that the state estimate is invariant with respect to the choice of error. Moreover, we also show that the covariance matrix is related through a change of variables. In order to prove this, we require the following lemma. 
\begin{lem}\label{lem:system-matrix-equiv}
It holds that
    \begin{align*}
        \begin{aligned}
         \bar A(g_{o}, u) &= \AdG_{g_{o}} (A(g_{o}, u) + \adG_{a(g_o, u)} ) \AdG_{g_{o}}\inv \\
         \bar{C}(g_{o}, u) &= C(g_{o}, u)\Ad_{ g_{o} }\inv.
        \end{aligned}
\end{align*}
\end{lem}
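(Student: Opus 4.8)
### Proof Proposal

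The plan is to prove both identities by directly relating the barred (right-invariant) system matrices to the unbarred (left-invariant) ones, exploiting the defining relations $\bar a(g,u) = \AdG_g a(g,u)$ and the left-right derivative conversion formula \eqref{eq:relation-right-left-diff}, namely $\bar{\odif*{}}f(g) = \odif f(g) \AdG_{g\inv}$. The two claims are of different difficulty: the formula for $\bar C$ is essentially immediate, while the formula for $\bar A$ requires carefully differentiating the product $\AdG_g a(g,u)$ with respect to $g$ in the Lie-group sense.

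First I would dispatch $\bar C$. By definition $\bar C(g_o,u) = \bar{\odif*{}}_1 c(g_o,u)$ and $C(g_o,u) = \odif*{}_1 c(g_o,u)$, so \eqref{eq:relation-right-left-diff} applied to $c(\cdot,u)$ gives $\bar C(g_o,u) = C(g_o,u)\AdG_{g_o\inv}$ directly, which is exactly the claimed identity since $\AdG_{g_o\inv} = \Ad_{g_o}\inv$.

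The substantive step is $\bar A$. Recall $\bar A(g_o,u) = \bar{\odif*{}}_1 \bar a(\bar g_o,u) + \adG_{\bar a(\bar g_o,u)}$ and $A(g_o,u) = \odif*{}_1 a(g_o,u) - \adG_{a(g_o,u)}$. I would compute $\bar{\odif*{}}_1 \bar a(g_o,u)$ by expanding $\bar a(\expG(\xi)g_o,u) = \AdG_{\expG(\xi)g_o}\, a(\expG(\xi)g_o,u) = \AdG_{\expG(\xi)}\AdG_{g_o}\, a(\expG(\xi)g_o,u)$ to first order in $\xi$. Using $\AdG_{\expG(\xi)} = I + \adG_\xi + \bigO(|\xi|^2)$, the relation $a(\expG(\xi)g_o,u) = a(g_o\cdot g_o\inv\expG(\xi)g_o,u) = a(g_o\expG(\AdG_{g_o\inv}\xi),u) = a(g_o,u) + \odif*{}_1 a(g_o,u)\AdG_{g_o\inv}\xi + \bigO(|\xi|^2)$, and the product rule, the first-order coefficient of $\xi$ works out to $\adG_{(\cdot)}\AdG_{g_o}a(g_o,u)$ evaluated appropriately plus $\AdG_{g_o}\odif*{}_1 a(g_o,u)\AdG_{g_o\inv}$. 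Writing the first term as $-\adG_{\AdG_{g_o}a(g_o,u)} = -\adG_{\bar a(g_o,u)}$ (using $\adG_\xi v = -\adG_v \xi$) and then adding the $+\adG_{\bar a(g_o,u)}$ from the definition of $\bar A$ cancels it, leaving $\bar A(g_o,u) = \AdG_{g_o}\odif*{}_1 a(g_o,u)\AdG_{g_o\inv}$. Finally I would insert $A(g_o,u) + \adG_{a(g_o,u)} = \odif*{}_1 a(g_o,u)$ to recognize this as $\AdG_{g_o}(A(g_o,u) + \adG_{a(g_o,u)})\AdG_{g_o\inv}$, which is the claim.

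The main obstacle I anticipate is bookkeeping the first-order expansion of $\bar a(\expG(\xi)g_o,u)$ correctly: one must be careful that the perturbation $\expG(\xi)$ sits on the \emph{left} of $g_o$, so converting it into a right perturbation of $a$ introduces the $\AdG_{g_o\inv}$ factor, and one must track which $\adG$ terms come from differentiating $\AdG_{\expG(\xi)}$ versus from the argument of $a$. The sign conventions in $\adG_\xi v = -\adG_v\xi$ and the cancellation with the $+\adG_{\bar a}$ term in the definition of $\bar A$ are where an error is most likely to creep in, so I would verify that cancellation explicitly rather than by analogy.
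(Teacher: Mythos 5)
Your proposal is correct and follows essentially the same route as the paper: $\bar C$ is read off directly from \eqref{eq:relation-right-left-diff}, and $\bar A$ is obtained by a first-order product-rule expansion of $\bar a = \AdG_{g}a$, with the resulting $\adG$ term cancelling against the $+\adG_{\bar a}$ in the definition of $\bar A$. The only cosmetic difference is that you expand the left perturbation $\expG(\xi)g_o$ directly (so the cancellation appears as $-\adG_{\bar a}+\adG_{\bar a}$), whereas the paper first converts the left derivative to a right derivative and expands $g_o\expG(\xi)$, arriving at the same conclusion.
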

\begin{proof}
Utilizing \eqref{eq:relation-right-left-diff}, we have
\begin{align}
    \begin{aligned}
    \bar{\odif*{}} \bar{a}_u(g_{o}) &= \odif*{}  [ \AdG_{  g_o} a_u(g_{o}) ] \AdG_{\bar g_o}\inv \\
    &= \partial_{\xi} [ \AdG_{ g_o} \AdG_{\expG(\xi)} a_u(g_{o} \exp \xi ) ] \rvert_{\xi = 0} \AdG_{  g_o}\inv \\
    &= \AdG_{ g_o}  \partial_{\xi} [ \expG(\adG_{\xi})  a_u(g_{o} \exp \xi ) ] \rvert_{\xi = 0} \AdG_{  g_o}\inv \\
    &= \AdG_{ g_o} \partial_{\xi}[ \expG(\adG_{\xi})  a_u(g_{o}) ] \rvert_{\xi = 0} \AdG_{  g_o}\inv \\
    &\quad + \AdG_{ g_o} \partial_{\xi}[a_u(g_{o}\exp \xi) ] \rvert_{\xi = 0} \AdG_{  g_o}\inv \\
    &=   - \AdG_{  g_o} \adG_{a_u(g_{o})}  \AdG_{  g_o}\inv + \AdG_{{g_{o}}} \odif*{} a_u( g_{o}) \AdG_{  g_o}\inv,  \hspace{-2em}
    \end{aligned}
\end{align}
where the shorthand $\partial_{\xi} = \partial / \partial \xi$ was used. This yields
\begin{align}
\begin{aligned}
    \bar A_u(g_o) &= \bar{\odif*{}}\bar a_u(g_o) + \adG_{\bar a_u(g_o)} \\
    &= \AdG_{{g_{o}}} \odif a_u( g_{o}) \AdG_{g_o}\inv \\
    &= \AdG_{g_{o} } (A_u(g_o) + \adG_{a_u(g_o)} ) \AdG_{ g_{o} }\inv.  
\end{aligned}  
\end{align}   
The relationship between $C$ and $\bar C$ follows from \eqref{eq:relation-right-left-diff}. 
\end{proof}

The following theorem presents the equivalence between the left-invariant and the right-invariant LEKF.
\begin{thm}\label{thm:equivalence}
    The EKFs in \Cref{alg:lie-ekf-left,alg:lie-ekf-right} are equivalent in the sense that they 
    are related 
    through the smooth change of variables 
    \begin{subequations}\label{eq:change-of-variables}
    \begin{align}
        \bar{g}_{o} &= g_{o},  \label{eq:g_o_hat=g_o} \\
        \bar{P} &= \Ad_{g_{o}} P  \Ad_{g_{o}}\tp. \label{eq:P_hat=adPad.T}
    \end{align}
    \end{subequations}
    
\end{thm}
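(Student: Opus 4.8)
The plan is to prove the theorem by induction on the alternating sequence of propagation steps and measurement updates, showing that \eqref{eq:change-of-variables} is an invariant of the combined recursion provided the two filters are initialized consistently, i.e. $\bar{g}_{o}(0) = g_{o}(0)$ and $\bar{P}(0) = \AdG_{g_{o}(0)}P(0)\AdG_{g_{o}(0)}\tp$. Throughout I would use the conjugation identity $\expG(\AdG_{g}\zeta) = g\,\expG(\zeta)\,g\inv$, the relations $\bar{a}(g,u) = \AdG_{g}a(g,u)$ and $\bar{B}(g,u) = \AdG_{g}B(g,u)$ (with $c$ and $D$ shared by both filters), \Cref{prop:jac}, and \Cref{lem:system-matrix-equiv}.

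\emph{Propagation.} First I would check that the two group equations coincide: substituting $\bar{g}_{o} = g_{o}$ into $\dot{\bar{g}}_{o} = [\bar{a}_{u}(\bar{g}_{o})]\twedG\bar{g}_{o} = [\AdG_{g_{o}}a_{u}(g_{o})]\twedG g_{o}$ and applying the conjugation identity gives $\dot{\bar{g}}_{o} = g_{o}[a_{u}(g_{o})]\twedG g_{o}\inv g_{o} = g_{o}[a_{u}(g_{o})]\twedG$, so $\bar{g}_{o}$ and $g_{o}$ solve the same ODE and agree for all time by uniqueness. For the covariance, I would differentiate $\AdG_{g_{o}}P\AdG_{g_{o}}\tp$ using $\frac{\mathrm{d}}{\mathrm{d}t}\AdG_{g_{o}} = \AdG_{g_{o}}\adG_{a_{u}(g_{o})}$, insert the left-filter Riccati equation for $\dot{P}$, and collect terms into $\AdG_{g_{o}}\big[(A_{u}(g_{o})+\adG_{a_{u}(g_{o})})P + P(A_{u}(g_{o})+\adG_{a_{u}(g_{o})})\tp + B_{u}(g_{o})QB_{u}(g_{o})\tp\big]\AdG_{g_{o}}\tp$. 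Conjugating the first two summands by $\AdG_{g_{o}}$ and applying \Cref{lem:system-matrix-equiv} rewrites them as $\bar{A}_{u}(g_{o})(\AdG_{g_{o}}P\AdG_{g_{o}}\tp) + (\AdG_{g_{o}}P\AdG_{g_{o}}\tp)\bar{A}_{u}(g_{o})\tp$, while $\AdG_{g_{o}}B_{u}(g_{o}) = \bar{B}_{u}(g_{o})$ handles the noise term; hence $\AdG_{g_{o}}P\AdG_{g_{o}}\tp$ obeys the right-filter covariance ODE and, by uniqueness, equals $\bar{P}$ throughout the propagation.

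\emph{Measurement update.} I would first observe that the innovation covariance is common to the two filters: with $\bar{C}_{u}(\bar{g}_{o}) = C_{u}(g_{o})\AdG_{g_{o}}\inv$ (\Cref{lem:system-matrix-equiv}) and $\bar{P} = \AdG_{g_{o}}P\AdG_{g_{o}}\tp$ one gets $\bar{C}_{u}(\bar{g}_{o})\bar{P}\bar{C}_{u}(\bar{g}_{o})\tp = C_{u}(g_{o})PC_{u}(g_{o})\tp$, so $\bar{K} = \AdG_{g_{o}}K$ and, since $c_{u}(\bar{g}_{o}) = c_{u}(g_{o})$, also $\bar{\zeta} = \AdG_{g_{o}}\zeta$. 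Then $\bar{g}_{o}^{+} = \expG(\AdG_{g_{o}}\zeta)g_{o} = g_{o}\expG(\zeta)g_{o}\inv g_{o} = g_{o}\expG(\zeta) = g_{o}^{+}$, which already gives the equality of state estimates. For the covariance I would show by direct substitution that $(I - \bar{K}\bar{C}_{u}(\bar{g}_{o}))\bar{P} = \AdG_{g_{o}}(I - KC_{u}(g_{o}))P\AdG_{g_{o}}\tp$, and then resolve the group-Jacobian prefactors on each side: by \eqref{eq:Adexpxi-Lambdaxi}, $\AdG_{g_{o}^{+}}\LambdaG_{\zeta} = \AdG_{g_{o}}\AdG_{\expG(\zeta)}\LambdaG_{\zeta} = \AdG_{g_{o}}\LambdaG_{-\zeta}$, so $\AdG_{g_{o}^{+}}P^{+}\AdG_{g_{o}^{+}}\tp = \AdG_{g_{o}}\LambdaG_{-\zeta}(I - KC_{u}(g_{o}))P\LambdaG_{-\zeta}\tp\AdG_{g_{o}}\tp$; by $\LambdaGbar_{\zeta} = \LambdaG_{-\zeta}$ and \eqref{eq:Adg-Lambda}, $\bar{\LambdaG}_{\bar{\zeta}} = \LambdaG_{-\AdG_{g_{o}}\zeta} = \AdG_{g_{o}}\LambdaG_{-\zeta}\AdG_{g_{o}}\inv$, so that $\bar{P}^{+} = \bar{\LambdaG}_{\bar{\zeta}}(I - \bar{K}\bar{C}_{u}(\bar{g}_{o}))\bar{P}\bar{\LambdaG}_{\bar{\zeta}}\tp$ collapses to the same expression once the $\AdG_{g_{o}}\inv$ and $\AdG_{g_{o}}\invtp$ factors cancel the $\AdG_{g_{o}}$'s introduced by the covariance identity. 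This proves $\bar{P}^{+} = \AdG_{g_{o}^{+}}P^{+}\AdG_{g_{o}^{+}}\tp$ and closes the induction.

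I expect the covariance bookkeeping in the measurement update to be the main obstacle: no single fact is deep, but one must simultaneously track the two distinct group Jacobians $\LambdaG$ and $\LambdaGbar$, the separate roles of $\AdG_{g}$, $\AdG_{g}\inv$ and $\AdG_{g}\tp$ (recalling that $\AdG_{g}$ is generally not orthogonal, so $\AdG_{g}\tp \neq \AdG_{g}\inv$), and the two halves of both \Cref{prop:jac} and \Cref{lem:system-matrix-equiv}, each used exactly once. A minor point to verify is that transposition commutes with the conjugation $X \mapsto \AdG_{g_{o}}X\AdG_{g_{o}}\inv$ of \Cref{lem:system-matrix-equiv}, so that the Lyapunov/Riccati structure survives the change of variables; the smoothness claim is then immediate from the smooth, invertible dependence of $\AdG_{g_{o}}$ on $g_{o}$.
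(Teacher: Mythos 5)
Your proposal is correct and follows essentially the same route as the paper's proof: the same change of variables is verified separately over propagation (via $\tfrac{\mathrm{d}}{\mathrm{d}t}\AdG_{g_o}=\AdG_{g_o}\adG_{a_u(g_o)}$ and \Cref{lem:system-matrix-equiv}) and over the measurement update (via $\bar K=\AdG_{g_o}K$, $\bar\zeta=\AdG_{g_o}\zeta$, and the two identities of \Cref{prop:jac}), with your explicit induction framing being only a presentational difference.
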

\begin{proof}
We start by showing the equivalence throughout the propagation step. Differentiating \eqref{eq:g_o_hat=g_o} yields
\begin{align}
    \dot{\bar{g}}_{o} &= g_{o}[a ]\twedG = [\Ad_{g_{o}} a ]\twedG g_{o} = [\bar{a} ]\twedG g_{o} = [\bar{a}]\twedG \bar{g}_{o},
\end{align}
which corresponds to the state-propagation of the right-invariant filter. 
By differentiating \eqref{eq:P_hat=adPad.T}, we obtain
\begin{align}\label{eq:P-equivalence-1}
    \begin{aligned}
    \dot{\bar P} 
    &= \Ad_{g_o}\ad_{a  }P\Ad_{g_o}\tp +\Ad_{g_o}P\ad_{a  }\tp\Ad_{g_o}\tp  \\
    &\quad  +\Ad_{g_{o}} (AP + PA\tp + BQ B\tp)   \Ad_{g_{o}}\tp \\
    &= \Ad_{g_o}(A + \ad_{a })P\Ad_{g_o}\tp  + \Ad_{g_o}P(A\tp + \ad_{a  }\tp)\Ad_{g_o}\tp \\
    &\quad + \Ad_{g_o}BQ B\tp\Ad_{g_o}\tp \\
    &= \bar{A}\bar{P} + \bar{P} \bar{A}\tp + \bar{B} Q\bar{B}\tp,
    \end{aligned} \raisetag{2em}
\end{align}
where the last equality follows from \Cref{lem:system-matrix-equiv}. This shows that the transformed covariance on the right-hand-side of \eqref{eq:P_hat=adPad.T} propagates as in the right-invariant filter. We now prove the equivalence over updates. Starting with \eqref{eq:g_o_hat=g_o}, we find
\begin{align}
    \begin{aligned}
        \bar g^+_o  = g_o \expG( \zeta  ) 
         = \expG(\AdG_{g_o} \zeta) g_o . 
    \end{aligned}
\end{align}
Using \Cref{lem:system-matrix-equiv}, the Kalman gains are related by
\begin{align}
    \begin{aligned}
    \bar{K} &= \bar{P} \bar{C}\tp(\bar{C} \bar{P} \bar{C}\tp + DN D\tp)\inv \\
    &= \AdG_{g_o} P \AdG_{g_o}\tp \AdG_{g_o}\invtp C\tp  \\
    & \quad \quad (C \AdG_{g_o}\inv \AdG_{g_o}P \AdG_{g_o}\tp \AdG_{g_o}\invtp C\tp + DND\tp )\tp \\
    &= \AdG_{g_o}PC\tp  (CPC\tp + DND\tp)\inv = \AdG_{g_o}K,
    \end{aligned}
\end{align}
such that $\Ad_{g_{o}}\zeta = \bar{\zeta}$. This shows that the estimates remain the same over an update. An analogous statement holds for the covariance, as is seen from
\begin{align}
    \begin{aligned}
        \bar P^+ &= \Ad_{g_o^+}P ^+\Ad_{g_o^+}\tp \\
        &= \Ad_{g_o}\AdG_{\expG(\zeta)}\LambdaG_\zeta(I - KC ) P \LambdaG_{\zeta}  \tp\AdG_{\expG(\zeta)}\tp\Ad_{g_o}\tp \\
        &=\LambdaG_{(-\Ad_{g_o}\zeta)} \Ad_{g_o}(I - KC)P\Ad_{g_o}\tp\LambdaG_{(-\Ad_{g_o}\zeta )} \tp \\
        &=  \bar{\LambdaG}_{ \bar{\zeta} } (I - \bar{K}\bar{C}) \bar{P} \bar{\LambdaG}_{\bar{\zeta} } \tp,
    \end{aligned}
\end{align}
where \eqref{eq:Adexpxi-Lambdaxi} and \eqref{eq:Adg-Lambda} from \Cref{prop:jac} were used to arrive at the last equality. 
It follows that the filters are related by the change of variables \eqref{eq:change-of-variables}. 
\end{proof}

\Cref{thm:equivalence} implies that both filters result in the same Lie-group-valued state estimate. Consequently, the choice between these filters should be based solely on implementation considerations. Moreover, Monte Carlo simulations in \Cref{sec:case-study} will demonstrate that \Cref{thm:equivalence} does not hold when using a zero-order or a first-order covariance reset approximation. 

\section{Simulation Case Study}\label{sec:case-study}

A simulation study of a spatial navigation problem is conducted to evaluate the equivalence property and the performance of the filters when using zero-order, first-order and full-order covariance resets. The SDEs describing the system are
\begin{equation}
\begin{aligned}
    \dot p^w &= v^w,\\
    \dot v^w &= R^w_b(f_\text{IMU} - b_f - w_f) + \gamma, \\
    \dot R^w_b &= R^w_b[\omega_\text{IMU}-b_\omega-w_\omega]^\wedge_{\SO(3)}, \\
    \dot b_f &= -T^{-1}_{b_f}b_f + w_{b_f}, \\
    \dot b_\omega &= -T^{-1}_{b_\omega}b_\omega + w_{b_\omega},
\end{aligned}
\end{equation}
where $p^w \in \Rf^3, v^w\in \Rf^3, R^w_b\in\SO(3)$ are the position, velocity and orientation of a rigid body with body frame $b$ given in an inertial world frame $w$, $\gamma \in \Rf^3$ is the gravity vector and $(\cdot)^\wedge_{\SO(3)} : \Rf^3\to \so(3)$ denotes the skew-symmetric map. The output of an inertial measurement unit (IMU) placed in the center of mass of the rigid body is simulated by
\begin{equation*}
    f_\text{IMU} = f^b + b_f + w_f \in \Rf^3,\quad
    \omega_\text{IMU} = \omega^b + b_\omega + w_\omega \in \Rf^3,
\end{equation*}
where the specific force $f^b$ is disturbed by a bias $b_f$ and zero-mean Gaussian noise $w_f \sim \mathcal{N}(0,\sigma_f^2 I_3)$ and the angular velocity $\omega^b$ is disturbed by a bias $b_\omega$ and noise $w_\omega\sim\mathcal{N}(0,\sigma_\omega^2I_3)$. The biases are modeled as Gauss-Markov processes with time-constants $T_{b_f}, T_{b_\omega}\in \Rf$, respectively, and driven by Gaussian white noises $w_{b_f} \sim \mathcal{N}(0, \sigma_{b_f}^2I_3)$ and $w_{b_\omega} \sim \mathcal{N}(0, \sigma_{b_\omega}^2I_3)$.

We endow the state with the group structure $g=((R, p, v), b_f, b_\omega) \in \SE_2(3) \times \Rf^3 \times \Rf^3$, as advocated in \cite{barrau_invariant_2014}. 

Loosely coupled global navigation satellite system (GNSS) position measurements are simulated by
\begin{equation}
    y = p^w + \nu \in \Rf^3,
\end{equation}
where $\nu\sim\mathcal{N}(0, \sigma_y^2I_3)$. 

A total number of $M=100$ stochastic 10-second system trajectories are generated, using stochastic specific force and angular velocity inputs resulting in mean and maximum acceleration and angular velocities across the trajectories of $\SI{2.13}{m/s^2}, \SI{8.15}{m/s^2}$ and $\SI{0.16}{rad/s}, \SI{0.49}{rad/s}$, respectively. The initial bias states are drawn from zero-mean Gaussians with $\sigma_{b_f,0} = \SI{0.0073}{\meter\per\second^2}$ and $\sigma_{b_\omega,0} = \SI{0.0012}{\radian\per\second}$, while the rest of the initial system states are set to zero. The IMU is sampled at $1000\text{Hz}$, and a GNSS measurement is sampled every second. The standard deviation for the IMU noises are set to $\sigma_f=\SI{6.9343e-4}{m/s^{3/2}}$, $\sigma_\omega=\SI{3.0853e-5}{\radian/s^{1/2}}$, $\sigma_{b_f}=\SI{4.1881e-5}{m/s^{5/2}}$ and $\sigma_{b_\omega}=\SI{3.9284e-6}{\radian/s^{3/2}}$, where the values align with a STIM300 IMU. The biases time constants are set to $T_{b_f}=T_{b_\omega} = 600\text{s}$. The GNSS measurement is sampled every second with $\sigma_y = \SI{0.07}{m}$. 

 The state is estimated using six filters: the left and right variants of the LEKF using zero-, first-, and full order resets, respectively. A zero-order corresponds to no covariance reset and is equivalent to the IEKF. A first-order reset uses only the first two terms of \eqref{eq:right-jacobian} and \eqref{eq:left-jacobian}, respectively. The full-order filters are named L-FO and R-FO, the first-order L-1O and R-1O, and the zero-order L-0O and R-0O.  

The initial covariance matrix for the left filters is set to $P_0=\diag[(20\si{\degree})^2 I_3, 10^2I_3, 10^2I_3, w_{b_f,0}^2I_3, w_{b_\omega,0}^2I_3]$. The corresponding initial covariance for the right filters is set according to \eqref{eq:P_hat=adPad.T}. The measurement variance is set to $N = 3\sigma_y^2I_3$, where the true simulated GNSS variance is multiplied by 3 to compensate for the errors generated by the linearization step in the Kalman update. The initial estimation state $g_{o_0}$ is sampled from the initial covariance so that the filter initializes in a consistent state. All the different EKFs are initialized with the same initial state. We use the IMU measurements directly and hence run our filter propagation using $\Delta t = \SI{0.001}{s}$.

We define the total error metric between $g_1, g_2 \in \LieG$ as
\begin{equation*}
    \begin{aligned}
    e(g_1, g_2) &\coloneqq |p_1-p_2| + |v_1-v_2| + |\log(R_2\inv R_1)| \\
    &\quad + |b_{f,1}-b_{f,2}| + |b_{\omega,1}-b_{\omega,2}| \geq 0 
    \end{aligned}
\end{equation*}
which is $0$ if and only if $g_1 = g_2$. Suppose that for each experiment indexed $i$ (with $i=1,\dotsc,M$) we have two trajectories $g_{1}^{(i)}, g_{2}^{(i)} : \{1,2,\dotsc, K\} \to \LieG$ where the value at time $k\Delta t$ is written as $g_{1}^{(i)}(k)$ and $g_{2}^{(i)}(k)$. The mean absolute error ($\text{MAE}$) for experiment $i$ is defined by
\begin{equation}
    \text{MAE}(g_{1}^{(i)},g_{2}^{(i)}) \coloneqq  \frac{1}{K} \sum_{k=1}^K e(g_{1}^{(i)}(k), g_{2}^{(i)}(k)).
\end{equation}
Across $M$ experiments, the average $\text{MAE}$ is given by
\begin{equation}
    \overline{\text{MAE}}(\{g_{1}^{(i)}\}_{i=1}^M, \{g_{2}^{(i)}\}_{i=1}^M) \coloneqq \frac{1}{M} \sum_{i=1}^{M} \text{MAE}(g_{1}^{(i)},g_{2}^{(i)}).
\end{equation}
For the special cases of position and orientation, define the position error and the orientation error as
\begin{equation*}
    \begin{aligned}
    e_p(g_1,g_2) &\coloneqq |p_1-p_2|, \quad
    e_R(g_1,g_2) \coloneqq |\log(R_2\inv R_1)|.
    \end{aligned}
\end{equation*}
These definitions naturally extend to corresponding MAE metrics for position and orientation errors by replacing $e(g_1, g_2)$ with $e_p(g_1,g_2)$ or $e_R(g_1, g_2)$ in the averaging formulas above. 

\Cref{tab:sim:filter-diffs} presents the total $\overline{\text{MAE}}$ between each filter. The error between the left and right full-order filters is $0$, indicating that their estimates are identical. However, this property does not hold for the other filters, as evident by the errors between them. Therefore, the choice between the left and right filter will only impact the resulting estimate when using a reduced-order filter. 

\Cref{fig:results:errors} shows the mean and $95$-percentile of the total, position and orientation $\text{MAE}$ for the filters, while \Cref{tab:sim:errors} displays the respective $\overline{\text{MAE}}$ between each filter and the ground truth. The full-order filters exhibit the lowest mean absolute error and have a similarly small orientation error as the left zero-order filter. Generally, the left zero-order filter performs very well in the first few seconds, even outperforming the full-order ones. However, it seems to diverge for a small number of trajectories. This trend appears to be common for the reduced-order filters, suggesting that omitting the covariance shift may lead to inconsistent filters. 
\begin{table}[tb]
    \centering
    \caption{The total $\overline{\text{MAE}}$ between each filter. }
    \begin{tabular}{r@{\hspace{1em}}*{6}{c}}
      & \textbf{L-FO} & \textbf{R-FO} & \textbf{L-1O} & \textbf{R-1O} & \textbf{L-0O} & \textbf{R-0O} \\
    \midrule
    \textbf{L-FO} & 0.00        & 0.00         & 5.39         & 6.08         & 6.20         & 3.00         \\
    \textbf{R-FO} & \greycell{} & 0.00         & 5.39         & 6.08         & 6.20         & 3.00         \\
    \textbf{L-1O} & \greycell{} & \greycell{} & 0.00         & 8.35         & 8.52         & 4.02         \\
    \textbf{R-1O} & \greycell{} & \greycell{} & \greycell{} & 0.00         & 5.87         & 7.45        \\
    \textbf{L-0O} & \greycell{} & \greycell{} & \greycell{} & \greycell{} & 0.00         & 7.50         \\
    \textbf{R-0O} & \greycell{} & \greycell{} & \greycell{} & \greycell{} & \greycell{} & 0.00         \\
    \bottomrule
    \end{tabular}
    \label{tab:sim:filter-diffs}
\end{table}
\begin{figure}[tb]
    \centering
    \begin{subfigure}[b]{0.36\textwidth}
        \centering
        \includegraphics[width=\textwidth, clip, trim=0 8 0 5]{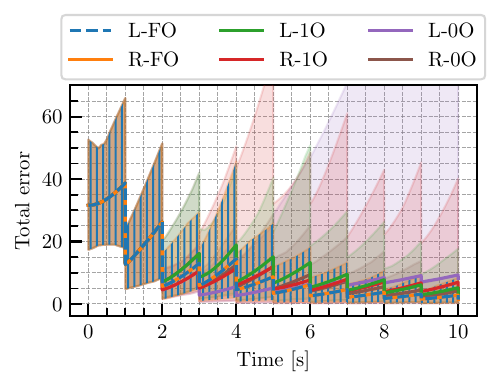}
        \caption{Total}
        \label{fig:results:error:tot}
    \end{subfigure}
    \hfill
    \begin{subfigure}[b]{0.36\textwidth}
        \centering
        \includegraphics[width=\textwidth, clip, trim=0 8 0 5]{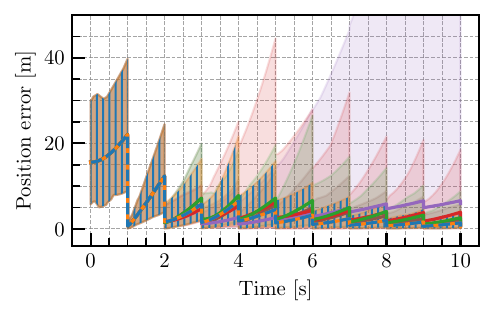}
        \caption{Position}
        \label{fig:results:error:pos}
    \end{subfigure}
    \hfill
    \begin{subfigure}[b]{0.36\textwidth}
        \centering
        \includegraphics[width=\textwidth, clip, trim=0 8 0 5]{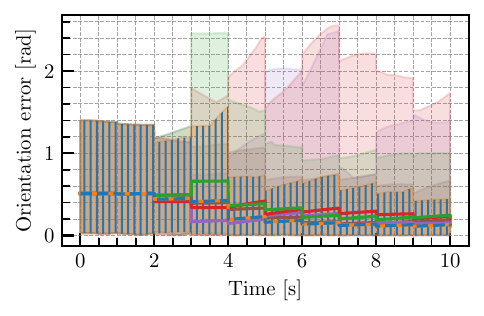}
        \caption{Orientation}
        \label{fig:results:error:ori}
    \end{subfigure}
    \caption{The mean and $95$-percentile of total, position, and orientation $\text{MAE}$ for all filters across $100$ trajectories.}
    \label{fig:results:errors}
\end{figure}%
\begin{table}[tb]
    \centering
    \caption{Total, position and orientation $\overline{\text{MAE}}$ between each filter and ground truth. }
    \begin{tabular}{r@{\hspace{1em}}*{3}{c}}
       & \textbf{Total} & \textbf{Position [m]} & \textbf{Orientation [rad]} \\
    \midrule
    \textbf{L-FO} & 9.37 & 4.12 & 0.28\\
    \textbf{R-FO} & 9.37 & 4.12 & 0.28\\
    \textbf{L-1O} & 11.92 & 5.06 & 0.38\\
    \textbf{R-1O} & 10.52 & 4.90 & 0.35\\
    \textbf{L-0O} & 10.25 & 5.30 & 0.27\\
    \textbf{R-0O} & 10.47 & 4.53 & 0.31\\
    \bottomrule
    \end{tabular}
    \label{tab:sim:errors}
\end{table}%
\section{Conclusions and Future Work}\label{sec:conclusions-and-future-work}
This work demonstrates that employing a full-order covariance reset results in equal state estimates for both the left-invariant and right-invariant extended Kalman filters on matrix Lie groups. Drawing upon established principles of the classical Kalman filter, we derive the filters using a Stratonovich interpretation of the stochastic differential equation. We demonstrate that the re-linearization of the error state model around the posterior estimate naturally includes a covariance reset by recognizing that the re-linearization represents a change of reference point in an extended concentrated Gaussian. We believe that the natural occurrence of the covariance reset during the filter derivation, coupled with our simulation study, suggests that the full-order reset is appropriate for matrix Lie-group-based EKFs.

Future research should investigate the implications of the covariance reset for the convergence results presented for group affine systems using the IEKF. Ultimately, a deeper understanding of Lie group statistics is needed to enhance our comprehension of Lie group estimators.

\bibliographystyle{IEEEtran_nourl}
\bibliography{IEEEabrv,zotero, zotero_finn}

\clearpage

\end{document}